\newif\ifproofs
\def\checkmark{\tikz\fill[scale=0.3](0,.35) -- (.25,0) -- (1,.7) -- (.25,.15) -- cycle;} 
\newcommand\BOX[1]{\boxed{\mbox{$#1$}}}
\newcommand{\xmark}{\ding{55}}
\definecolor{dkgreen}{rgb}{0,0.5,0}
\definecolor{dkred}{rgb}{0.5,0,0}
\definecolor{dkgray}{rgb}{0.3,0.3,0.3}
\footnotesize\color{dkgray},
\newcommand\lt[1]{{\lstinline@#1@}} 
\begin{document}
\pagestyle{headings}  

\title{Connecting Program Synthesis and Reachability: {\large Automatic Program Repair using Test-Input Generation}}
\author{ThanhVu Nguyen\inst{1} \and Westley Weimer\inst{2} \and
Deepak Kapur\inst{3} \and Stephanie Forrest\inst{3}}
\institute{University of Nebraska, Lincoln NE, USA, \email{tnguyen@cse.unl.edu}\\
\and University of Virginia, Charlottesville VA, USA, \email{weimer@virginia.edu}\\
\and University of New Mexico, Albuquerque NM, USA, \email{ \{kapur,forrest\}@cs.unm.edu}}

\maketitle              

\begin{abstract}

We prove that certain formulations of program synthesis and reachability are equivalent.
Specifically, our constructive proof shows the reductions between the
template-based synthesis problem, which generates a program in a
pre-specified form, and the reachability problem, which decides the reachability of a program location.
This establishes a link between the two research fields and allows for the transfer of techniques and results between them.

To demonstrate the equivalence, we develop a program repair prototype using reachability tools. 
We transform a buggy program and its required specification into a specific program containing a location reachable only when the original program can be repaired, and then apply an off-the-shelf test-input generation tool on the transformed program to find test values to reach the desired location.
Those test values correspond to repairs for the original program.
Preliminary results suggest that our approach compares favorably to other repair methods.

\keywords{program synthesis; program verification; program
  reachability; reduction proof;
automated program repair; test-input generation;}
\end{abstract}

\section{Introduction}

Synthesis is the task of generating a program that meets a required
specification.  Verification is the task of validating program
correctness with respect to a given specification.  Both are
long-standing problems in computer science, although there has been
extensive work on program verification and comparatively less on
program synthesis until recently.  Over the past several years,
certain verification techniques have been adopted to create programs,
e.g., applying symbolic execution to synthesize program
repairs~\cite{konighoferautomated,konighofer2013repair,mechtaev2016angelix,nguyen2013semfix}, 
suggesting the possibility that these two problems may be ``two sides of the same coin''.
Finding and formalizing this equivalence is valuable in both theory and
practice: it allows comparisons between the complexities and
underlying structures of the two problems, and it raises the
possibility of additional cross-fertilization between two fields that are usually
treated separately (e.g., it might enable approximations designed to solve one problem to be applied directly to the other).

This paper establishes a formal connection between certain formulations of program synthesis and verification.
We focus on the \emph{template-based synthesis} problem, which generates missing code for partially completed programs, and we view verification as a \emph{reachability} problem, which checks if a program can reach an undesirable state.
We then constructively prove that template-based synthesis and reachability are \emph{equivalent}.
We reduce a template-based synthesis problem, which consists of a program with parameterized templates to be synthesized and a test suite specification, to a program consisting of a specific location that is reachable only when those templates can be instantiated such that the program meets the given specification.
To reduce reachability to synthesis, we transform a reachability instance consisting of a program and a given location into a synthesis instance that can be solved only when the location in the original problem is reachable. 
Thus, reachability solvers can be applied to synthesize code, and
conversely, synthesis tools can be used to determine reachability.

To demonstrate the equivalence, we use the reduction to develop a new automatic program repair technique using an existing test-input generation tool.
We view \emph{program repair} as a special case of template-based synthesis in which ``patch'' code is generated so that it behaves correctly. 
We present a prototype tool called CETI that automatically repairs C
programs that violate test-suite specifications.
Given a test suite and a program failing at least one test in that suite, CETI first applies fault localization to obtain a list of ranked suspicious statements from the buggy program. 
For each suspicious statement, CETI transforms the buggy program and the information from its test suite into a program reachability instance. 
The reachability instance is a new program containing a special {\tt if} branch, whose {\tt then} branch is reachable only when the original program can be repaired by modifying the considered statement. 
By construction, any input value that allows the special location to be reached can map directly to a repair template instantiation that fixes the bug. 
To find a repair, CETI invokes an off-the-shelf automatic test-input
generation tool on the transformed code to find test values that can
reach the special branch location.  
These values correspond to changes that, when applied to the original
program, cause it to pass the given test suite.  This procedure is guaranteed to be
sound, but it is not necessarily complete.  That is, there  may be
bugs that the procedure cannot find repairs for, but all proposed repairs are guaranteed to be correct with respect to the given
test suite.  We evaluated CETI on the {\tt Tcas} program~\cite{SIR}, which has 41 seeded defects, and found that it repaired over 60\%, which compares
favorably with other state-of-the-art automated bug repair approaches.

To summarize, the main contributions of the paper include:
\begin{itemize}
\item \emph{Equivalence Theorem}: We constructively prove that the
  problems of template-based program synthesis and reachability in
  program verification are equivalent.
  Even though these two problems are shown to be undecidable in general, the constructions allow heuristics solving one problem to be applied to the other.

\item \emph{Automatic Program Repair}: We present a new automatic
  program repair technique, which leverages the construction.  The technique
reduces the task of synthesizing program repairs to a
reachability problem, where the results produced by a test-input
generation tool correspond to a patch that repairs the original program.

\item \emph{Implementation and Evaluation}: We implement the repair
  algorithm in a prototype tool that automatically repairs C programs,
  and we evaluate it on a benchmark that has been targeted by multiple program repair algorithms. 
\end{itemize}

\section{Motivating Example}\label{sec:motiv}

\begin{figure}[t]
\begin{minipage}{0.55\linewidth}
\begin{lstlisting}[numberstyle=\scriptsize,xleftmargin=0.00cm,basicstyle=\scriptsize,language=C]
int is_upward(int in,int up,int down){
  int bias, r;
  if (in) 
    bias = down; //fix: bias = up + 100
  else
    bias = up;
  if (bias > down)
    r = 1;
  else
    r = 0;
  return r;
}
\end{lstlisting}
\end{minipage}
\begin{minipage}{0.44\linewidth}
{\scriptsize
  \begin{tabular}{c | r r r | c c |c}
    &\multicolumn{3}{c}{{\bf Inputs}}&\multicolumn{2}{c}{{\bf Output}}\\
    {\bf Test}&in&up&down &  expected &observed &{\bf Passed?}\\
    \midrule
    1 & 1&  0&100  & 0& 0 & \checkmark \\
    2 & 1& 11&110  & 1& 0 & \xmark \\
    3 & 0&100& 50  & 1& 1 & \checkmark \\
    4 & 1&-20& 60  & 1& 0 & \xmark \\
    5 & 0&  0& 10  & 0& 0 & \checkmark \\
    6 & 0&  0& -10 & 1& 1 & \checkmark \\
  \end{tabular}
}
\end{minipage}
\caption{Example buggy program and test suite.  CETI suggests replacing line 4 with the statement {\tt bias = up + 100;} to repair the bug.}
\label{fig1}
\end{figure}

We give a concrete example of how the reduction from template-based synthesis to reachability can be used to repair a buggy program.
Consider the buggy code shown in Figure~\ref{fig1}, a function excerpted from a traffic collision avoidance system~\cite{SIR}.
The intended behavior of the function can be precisely described as:
{\tt is\_upward(in,up,down) = in*100 + up > down}.
The table in Figure~\ref{fig1} gives a test suite describing the intended
behavior. The buggy program fails two of the tests, which we propose
to repair by synthesizing a patch.

We solve this synthesis problem by restricting ourselves to generating patches under predefined templates, e.g., synthesizing expressions involving program variables and unknown parameters, and then transforming this template-based synthesis problem into a reachability problem instance.
In this approach, a template such as
\[
\BOX{c_0}+\BOX{c_1}v_1+\BOX{c_2}v_2
\]
is a linear combination\footnote{More general templates (e.g., nonlinear polynomials) are also possible as shown in Section~\ref{sec:equiv}.}
 of program variables $v_i$ and unknown template parameters $\BOX{c_i}$. 
For clarity, we often denote template parameters with a box to
distinguish them from normal program elements. 
This template can be instantiated to yield concrete expressions such as
$200 + 3v_1 + 4v_2$ via $c_0=200, c_1=3, c_2=4$. 
To repair Line 4 of Figure~\ref{fig1}, 
({\tt bias = down;}) with a linear template, we would replace
Line 4 with: 

{\tt bias = \BOX{c_0} +\BOX{c_1}*bias +\BOX{c_2}*in +\BOX{c_3}*up +\BOX{c_4}*down;}

\noindent where {\tt bias}, {\tt in}, {\tt up}, and {\tt down} are the
variables in scope at Line 4 and the value of each $c_i$ must be found. 
We propose to find them by constructing a special program reachability
instance and then solving that instance. 

The construction transforms the program, its test suite (Figure~\ref{fig1}), and
the template statement into a reachability instance
consisting of a program and target location. 
The first key idea is to derive a new program containing the template
code with the template parameters \BOX{c_i} represented
explicitly as program variables $c_i$. 
This program defines the reachability instance, which must assign values to each $c_i$. 
The second key idea is that each test case is explicitly represented as a conditional expression. Recall that we
seek a single synthesis solution (one set of values for $c_i$) that
respects all tests.  Each test is encoded as a conditional expression (a
reachability constraint), and we take their conjunction, being careful to refer
to the same $c_i$ variables in each expression.  In the example, we must
find one repair that satisfies all six tests, not six separate repairs that
each satisfy only one test. 

The new program, shown in Figure~\ref{fig2}, contains a function {\tt is\_upward$_P$} that resembles the function {\tt is\_upward}
in the original code but with Line 4 replaced by the template statement
with each reference to a template parameter replaced by a reference to
the corresponding new externally-defined program variable.
The program also contains a starting function {\tt main},
which encodes the inputs and expected outputs from the
given test suite as the guards to the conditional statement leading to the
target location $L$.  
Intuitively, the reachability problem
instance asks if we can find values for each $c_i$ that allow control flow
to reach location $L$, which is only reachable iff all tests are satisfied.

\begin{figure}[t]
 \begin{lstlisting}[numbers=none,xleftmargin=0.0cm,basicstyle=\scriptsize,multicols=2]
int c$_0$,c$_1$,c$_2$,c$_3$,c$_4$; //global inputs

int is_upward$_P$(int in,int up,int down){
  int bias, r;
  if (in) 
    bias = 
     c$_0$+c$_1$*bias+c$_2$*in+c$_3$*up+c$_4$*down;
  else 
    bias = up;
  if (bias > down) 
     r = 1;
  else 
     r = 0;
  return r;
}

int main() { 
   if(is_upward$_P$(1,0,100) == 0 &&
      is_upward$_P$(1,11,110) == 1 &&
      is_upward$_P$(0,100,50) == 1 &&
      is_upward$_P$(1,-20,60) == 1 &&
      is_upward$_P$(0,0,10) == 0 &&
      is_upward$_P$(0,0,-10) == 1){
     [L]                                
   }
   return 0;
}
\end{lstlisting}
\caption{The reachability problem instance derived from the
buggy program and test suite in Figure~\ref{fig1}. Location $L$ is
reachable with values such as $c_0=100,c_1=0,c_2=0,c_3=1,c_4=0$. These values
suggest using the statement {\tt bias = 100 + up;} at Line 4 in the
buggy program.}
\label{fig2}
\end{figure}  

This reachability instance can be given as input to any
off-the-self test-input generation tool. 
Here, we use KLEE~\cite{cadar2008klee} to find value for
each $c_i$. 
KLEE determines that the values $c_0 = 100, c_1 =0, c_2 =0, c_3=1, c_4=0$ allow control flow to reach location $L$.
Finally, we map this solution back to the original program repair
problem by applying the $c_i$ values to the template

{\tt bias = \BOX{c_0} +\BOX{c_1}*bias +\BOX{c_2}*in +\BOX{c_3}*up +\BOX{c_4}*down;}

\noindent generating the statement:

{\tt  bias = 100 + 0*bias + 0*in + 1*up + 0*down;}

\noindent which reduces to {\tt bias = 100 + up}.
Replacing the statement {\tt bias = down} in the original program
with the new statement {\tt bias = 100 + up} produces a program that passes all of the test cases.

To summarize, a specific question (i.e., can the bug be
repaired by applying template $X$ to line $Y$ of program $P$ while satisfying
test suite $T$?)  is reduced to a single reachability instance, solvable using a reachability tool such as a test-input generator. 
This reduction is formally established in the next section.

\section{Connecting Program Synthesis and Reachability}\label{sectlpsr}
We establish the connection between the template-based formulation of program synthesis and the reachability problem in program verification.
We first review these problems and then show their equivalence.

\subsection{Preliminaries}
We consider standard imperative programs in a language like C.
The base language includes usual program constructs such as assignments, conditionals, loops, and functions.
A function takes as input a (potentially empty) tuple of values and returns an output value.
A function can call other functions, including itself. 
For simplicity,  we equate
a program $P$ with its finite set of functions, including a special starting function $\text{main}_P$.
For brevity, we write $P(c_i,\dots,c_n)=y$ to denote that evaluating
the function $\text{main}_P \in P$ on the input tuple
$(c_i,\dots,c_n)$ results in the value $y$.  Program or function
semantics are specified by a test suite consisting of a finite set of input/output pairs.
When possible, we use $c_i$ for concrete input values and 
$v_i$ for formal parameters or variable names.

To simplify the presentation, we assume that the
language also supports exceptions, 
admitting non-local control flow by raising and catching exceptions as in modern programming languages such as C++ and Java. 
We discuss how to remove this assumption in Section~\ref{r2s}.

\subsubsection*{Template-based Program Synthesis.} 
\emph{Program synthesis} aims to automatically generate program code to meet a required specification. 
The problem of synthesizing a complete program is generally undecidable~\cite{srivastava2010satisfiability}, so many practical synthesis techniques operate on partially-complete programs, filling in well-structured gaps~\cite{solar2005programming,Srivastava:2010:PVP:1707801.1706337,lezama2008program,alchemist,alur2015syntax,srivastava2013template}. 
These techniques synthesize programs from specific grammars,
forms, or templates and do not generate arbitrary code.
A synthesis \emph{template} expresses the shape of program constructs,
but includes holes (sometimes called template parameters), as
illustrated in the previous section.  We refer to a program containing
such templates as a \emph{template program} and 
extend the base language to include
a finite, fixed set of template parameters \boxed{c_i} as shown earlier.
Using the notation of contextual operational semantics, we write $P[c_0, \dots, c_n]$ to denote the result of instantiating the template program $P$ with template parameter values $c_0 \dots c_n$. 
To find values for the parameters in a template program, many techniques (e.g.,~\cite{solar2005programming,Srivastava:2010:PVP:1707801.1706337,alur2015syntax,srivastava2013template})
 encode the program and its specification as a logical formula (e.g., using axiomatic semantics) and use a constraint solver such as SAT or SMT to find values for the parameters $c_i$ that satisfy the formula.
Instantiating the templates with those values produces a complete program that adheres to the required specification.

\begin{definition}\label{defps}
  {\bf Template-based Program Synthesis Problem.} 
  Given a template program $Q$ with a finite set of template parameters 
  $S=\{\boxed{c_1}, \dots, \boxed{c_n}\}$ and a finite test suite of
  input/output 
  pairs $T=\{ (i_1,o_1), \dots, (i_m,o_m) \}$, do there exist
  parameter values $c_i$ such that $\forall (i,o) \in T ~.~ (Q[c_1, \dots, c_n])(i) = o$?
\end{definition}

\noindent 
For example, the program in Figure~\ref{fig1} with Line 4 replaced by  
{\tt bias = \BOX{c_0} +\BOX{c_1}*bias +\BOX{c_2}*in +\BOX{c_3}*up +\BOX{c_4}*down} 
is an instance of template-based synthesis.
This program passes its test suite given in Figure~\ref{fig1} using the solution $\{c_0=100,c_1=1,c_2=0,c_3=1,c_4=0\}$. 
The decision formulation of the problem asks if satisfying values $c_1 \dots c_n$ exist; in this presentation we require that witnesses be produced. 

\subsubsection{Program Reachability.}

\emph{Program reachability} is a classic problem which asks if a particular program state or location can be observed at run-time.
It is not decidable in general, because it can encode the halting problem
(cf. Rice's Theorem~\cite{rice1953classes}).
However, reachability remains a popular and well-studied verification problem in practice. 
In model checking~\cite{clarke1999model}, for example, reachability is used
to determine whether program states representing undesirable behaviors
could occur in practice.  Another application area is 
test-input generation~\cite{cadar2013symbolic}, which aims to produce test values to explore all reachable program locations.

\begin{definition}\label{defpr}
{\bf Program Reachability Problem.}
Given a program $P$, set of program variables $\{ x_1,\dots,x_n \}$ and 
target location $L$, do there exist input values $c_i$ such that the
execution of $P$ with $x_i$ initialized to $c_i$ 
reaches $L$ in a finite number of steps?
\end{definition}

\begin{figure}[t]
\centering
\begin{minipage}{0.38\linewidth}
\begin{lstlisting}[numbers=none,xleftmargin=0.5cm,basicstyle=\scriptsize,showlines]
//global inputs
int x, y; 

int P(){
  if (2 * x == y) 
    if (x > y + 10) 
      [L]
  
  return 0;
} 

\end{lstlisting}
\caption{An instance of program reachability. Program $P$ reaches location $L$ using the solution $\{x=-20,y=-40\}$.}
\label{test-input}
\end{minipage}
\hspace{\fill}
\begin{minipage}{0.58\linewidth}
\begin{lstlisting}[numbers=none,xleftmargin=0.0cm,basicstyle=\scriptsize,showlines,multicols=2]

int P$_Q$() {
  if (2*$\boxed{x}$ == $\boxed{y}$)
    if($\boxed{x}$ > $\boxed{y}$+10)
      //loc L in P
      raise REACHED;

  return 0;
} 

int main$_Q$() {
  //synthesize x, y
  int x = c$_x$; 
  int y = c$_y$;
  try 
    P$_Q$();
  catch (REACHED)
    return 1; 

  return 0;
} 
\end{lstlisting}
\caption{Reducing the reachability example in Figure~\ref{test-input} to a
template-based synthesis program (i.e., synthesize assignments to
$c_x$ and $c_y$). The test suite of the reduced synthesis program is $Q() = 1$.}
\label{test-input-syn}
\end{minipage}
\end{figure}

\noindent
For example, the program in Figure~\ref{test-input} has a reachable location $L$ using the solution $\{x=-20,y=-40\}$. 
Similar to the synthesis problem, the decision problem formulation of reachability merely asks if the input values $c_1,\dots,c_n$ exist; in this presentation we 
require witnesses be produced. 

\subsection{Reducing Synthesis to Reachability}\label{s2r}

We present the constructive reduction from synthesis to
reachability.  The key to the reduction is a particular ``gadget'',
which constructs a reachability instance that can be satisfied iff the synthesis
problem can be solved. 

\paragraph{Reduction:} Let $Q$ be a template program with a set of template parameters $S=\{\boxed{c_1}, \dots, \boxed{c_n}\}$ and a set of finite tests $T=\{ (i_1,o_1), \dots \}$.  
We construct $\mathsf{GadgetS2R}(Q,S,T)$, which returns a new program $P$ (the constructed reachability instance) with a special location $L$, as follows: 

\begin{enumerate}

\item For every template parameter $\boxed{c_i}$, add a fresh global
variable $v_i$. A solution to this reachability instance is an assignment
of concrete values $c_i$ to the variables $v_i$. 

\item For every function $q\in Q$, define a similar function $q_P \in P$. 
The body of $q_P$ is the same as $q$, 
but with every reference to a template parameter $\boxed{c_i}$ replaced
with a reference to the corresponding new variable $v_i$.  

\item $P$ also contains a starting function main$_P$ that encodes the specification information from the test suite $T$ as a conjunctive expression $e$:
\[
e = \bigwedge_{(i,o) \, \in\, T} \text{main}_Q{_P}(i) = o
\]

where main$_Q{_P}$ is a function in $P$ corresponding to the starting function main$_Q$ in $Q$.
In addition, the body of main$_P$ is one conditional statement
leading to a fresh target location $L$ if and only if $e$ is true. 
Thus, main$_P$ has the form
    
\begin{lstlisting}[numbers=none,xleftmargin=1.cm,basicstyle=\scriptsize]
int main$_P$() {
  if ($e$)
    [L]
} 
\end{lstlisting}

\item 
The derived program $P$ consists of the declaration of the new variables 
(Step 1), the functions $q_P$'s (Step 2), and the starting function main$_P$ (Step 3).
\end{enumerate}

\paragraph{Example:} Figure~\ref{fig2} illustrates the reduction using the example
from Figure~\ref{fig1}.  
The resulting reachability program can arrive at location $L$ using
the input $\{c_0=100,c_1=0,c_2=0,c_3=1,c_4=0\}$, which corresponds to a solution.

\paragraph{Reduction Correctness and Complexity:} The correctness of $\mathsf{GadgetS2R}$, which transforms synthesis to reachability, relies on two key invariants\footnote{The full proof is given in the Appendix of~\cite{equiv_unl_tr}.}. 
First, function calls in the derived program $P$ 
have the same behavior as template functions in the original program $Q$.
Second, location $L$ is reachable if and only if values $c_i$ can be
assigned to variables $v_i$ such that $Q$ passes all of the tests. 

The complexity of $\mathsf{GadgetS2R}$ is \emph{linear} in both the program size and number of test cases of the input instance $Q,S,T$.
The constructed program $P$ consists of all functions in $Q$ (with $|S|$
extra variables) and a starting function main$_P$ with an expression encoding the test suite $T$.

This reduction directly leads to the main result for this direction of the equivalence:
\begin{theorem}\label{thsr-simple}
The template-based synthesis problem in Definition~\ref{defps} is reducible to the reachability problem in Definition~\ref{defpr}.
\end{theorem}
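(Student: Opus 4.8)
The reduction is already exhibited by the construction $\mathsf{GadgetS2R}$ defined above, which maps a synthesis instance $(Q,S,T)$ to a reachability instance consisting of the program $P$, the fresh global variables $\{v_1,\dots,v_n\}$, and the target location $L$. My plan is therefore not to build a new map but to verify that this one is a correct reduction: it is computable---indeed, as already noted, linear in $|Q|$ and $|T|$---and it must preserve solvability, namely that $(Q,S,T)$ admits parameter values satisfying Definition~\ref{defps} if and only if $(P,\{v_i\},L)$ admits input values reaching $L$ per Definition~\ref{defpr}. Because both formulations are witness-producing, I would in fact show the stronger statement that the \emph{same} tuple $(c_1,\dots,c_n)$ witnesses one problem exactly when it witnesses the other, so that recovering a synthesis witness from a reachability witness is the identity map on the values $c_i$.

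The first step is a semantic-correspondence lemma capturing the first invariant: fix any values $c_1,\dots,c_n$ and initialize each global $v_i$ to $c_i$ in $P$; then every derived function $q_P$ behaves identically to the instantiated function $q$ in $Q[c_1,\dots,c_n]$. I would prove this by induction over the operational-semantics derivation of an execution. The only syntactic difference between $q_P$ and $q$ is that each occurrence of the template parameter $\BOX{c_i}$ is replaced by a read of $v_i$; since $v_i$ holds $c_i$ and is never reassigned, the two expressions evaluate to the same value in every reachable state, and the inductive step for every other construct (assignment, conditional, loop, and---crucially---the (self-)recursive call) is immediate because the surrounding syntax is unchanged. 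The payoff is that $\mathrm{main}_{Q_P}(i)$ terminates and returns $o$ under the initialization $v_i=c_i$ exactly when $(Q[c_1,\dots,c_n])(i)=o$.

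With the correspondence in hand, the second step establishes the main equivalence, matching the second invariant. For the forward direction, suppose $(c_1,\dots,c_n)$ makes $Q$ pass every test. Running $P$ with $v_i=c_i$, each conjunct $\mathrm{main}_{Q_P}(i)=o$ of the guard $e$ evaluates to true by the lemma, so $e$ is true, the single conditional in $\mathrm{main}_P$ is taken, and control reaches $L$ in finitely many steps. For the backward direction, suppose $L$ is reached under some $v_i=c_i$; then the guard $e$ must have evaluated to true, which, since $e$ is the conjunction over all tests, forces every conjunct $\mathrm{main}_{Q_P}(i)=o$ to hold, and the lemma then yields $(Q[c_1,\dots,c_n])(i)=o$ for all tests, i.e.\ a synthesis solution.

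I expect the main obstacle to be the careful treatment of nontermination inside the guard. Reachability requires that $L$ be reached in a \emph{finite} number of steps, and the synthesis specification requires each $(Q[c_1,\dots,c_n])(i)$ to actually produce $o$; both sides therefore fail in the same way when a call $\mathrm{main}_{Q_P}(i)$ diverges. I would argue that short-circuit evaluation of the conjunction $e$ is benign here: a divergent conjunct prevents $L$ from being reached, and the corresponding test is likewise unsatisfied, so no spurious agreement arises in either direction. The remaining care is purely bookkeeping---confirming that the extra global declarations and the parameter-to-variable renaming do not alter control flow---so the operational induction of the first step is where the real content lies.
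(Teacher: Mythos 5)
Your proposal matches the paper's own proof in both structure and substance: the paper likewise first proves a semantic-correspondence lemma by induction on the operational-semantics derivation (showing $q_P$ with $v_i \mapsto c_i$ mirrors $q[c_1,\dots,c_n]$, the only interesting case being the template-parameter read), and then concludes that $L$ is reachable iff all tests pass, with the witness values $c_i$ carried over unchanged. The only cosmetic difference is that the paper phrases the second step via weakest preconditions on the guard $e$ while you evaluate the conjunction directly, and your explicit attention to divergence inside the guard is a point the paper glosses over rather than a divergence in approach.
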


\subsection{Reducing Reachability to Synthesis}\label{r2s}

Here, we present the reduction from reachability to synthesis.
The reduction also uses a particular gadget to construct a synthesis instance that can be solved iff the reachability instance can be determined.

\paragraph{Reduction:}
Let $P$ be a program, $L$ be a location in $P$, and $V = \{v_1, \dots, v_n\}$ be
global variables never directly assigned in $P$. We construct
$\mathsf{GadgetR2S}(P,L,V)$, which returns a template program $Q$
with template parameters $S$ and a test suite $T$, as follows: 

\begin{enumerate}

\item For every variable $v_i$, define a fresh template variable
$\boxed{c_i}$.  Let the set of template parameters $S$ be the set
containing each $\boxed{c_i}$. 

\item For every function $p \in P$, define a derived function $p_Q \in Q$.
Replace each function call to $p$ with the corresponding call to $p_Q$. 
Replace each use of a variable $v_i$ with a read from the
corresponding template parameter $\boxed{c_i}$; remove all declarations
of variables $v_i$. 

\item Raise a unique exception
 \lt{REACHED}, at the location in $Q$ corresponding to the location $L$ in
 $P$. As usual, when an exception is raised, control immediately jumps to
 the most recently-executed \emph{try-catch} block matching that exception.
 The exception \lt{REACHED} will be caught iff the location in
 $Q$ corresponding to $L \in P$ would be reached.

\item Define a starting function main$_Q$ that has no inputs and returns an integer value.  
Let main$_P{_Q}$ be the function in $Q$ corresponding to the starting function main$_P$ in $P$. 
\begin{itemize}
\item Insert \emph{try-catch} construct that calls $p_Q$  and returns the value $1$ if the exception \lt{REACHED} is caught.
\item At the end of main$_Q$, return the value $0$.
\item Thus, main$_Q$ has the form 
\begin{lstlisting}[numbers=none,xleftmargin=1.cm,basicstyle=\scriptsize,language=C,mathescape]
int main$_Q$() {
  try {
    main$_P{_Q}$();
  } catch (REACHED) { 
    return $1$;
  } 
  return $0$;
}
\end{lstlisting}
\end{itemize}

\item The derived program $Q$ consists of the finite set of template
parameters $S=\{\boxed{c_1}), \dots, \boxed{c_n}\}$ (Step 1), functions $p_Q$'s (Step 2), and the starting function main$_Q$ (Step 4).

\item The test suite $T$ for $Q$ consists of exactly one test case $Q() =
1$, indicating the case when the exception \lt{REACHED} is raised and
caught.
\end{enumerate}

\paragraph{Example:}  Figure~\ref{test-input-syn} illustrates the reduction using the example from Figure~\ref{test-input}.
The synthesized program can be satisfied by $c_0=-20,c_1=-40$,
corresponding to the input ($x=-20,y=-40$) which reaches $L$ in Figure~\ref{test-input}.

The exception \lt{REACHED} represents a unique signal to main$_Q$ that the location $L$ has been reached.
Many modern languages support exceptions for handling special events,
but they are not strictly necessary for the reduction to succeed. 
Other (potentially language-dependent) implementation techniques could
also be employed.  Or, we could use a tuple to represent the signal, e.g., returning
$(v, \mathsf{false})$ from a function that normally returns $v$ if the
location corresponding $L$ has not been reached and $(1, \mathsf{true})$ as
soon as it has. 
BLAST~\cite{beyer2007software}, a model checker for C programs (which do
not support exceptions), uses \emph{goto} and labels to indicate when a
desired location has been reached. 

\paragraph{Reduction Correctness and Complexity:}
The correctness of the $\mathsf{GadgetS2R}$, which transforms reachability to synthesis, depends on two key invariants\footnote{The full proof is given in the Appendix of~\cite{equiv_unl_tr}.}. 
First, for any $c_i$, execution in the derived template program $Q$ with $\boxed{c_i} \mapsto c_i$ mirrors execution in $P$ with $v_i \mapsto c_i$
up to the point when $L$ is reached (if ever). Second, the exception \lt{REACHED} is raised in $Q$ iff location $L$ is reachable in $P$. 

The complexity of $\mathsf{GadgetR2S}$ is \emph{linear} in the input
instance
$P,L,v_i$. The constructed program $Q$ consists of all functions in $P$
and a starting function main$Q$ having $n$ template variables,
where $n = |\{v_i\}|$. 

This reduction directly leads to the main result for this direction of the equivalence:
\begin{theorem}\label{thrs-simple}
The reachability problem in Definition~\ref{defpr} is
reducible to the template-based synthesis problem in Definition~\ref{defps}.
\end{theorem}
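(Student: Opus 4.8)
The plan is to verify that the construction $\mathsf{GadgetR2S}$ of the previous subsection is a correct many--one reduction: it is effectively computable, and the constructed synthesis instance $(Q,S,T)$ has a solution if and only if the original reachability instance $(P,L,V)$ does. Computability and the linear size bound are immediate from the purely syntactic construction, so the content lies in the biconditional, which I would obtain from the two invariants stated after the reduction. Since the test suite $T$ is the single case $Q()=1$, the synthesis question collapses to: do there exist values $c_i$ for the template parameters $\boxed{c_i}$ such that evaluating $\text{main}_Q$ in $Q[c_1,\dots,c_n]$ returns $1$?

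First I would prove the \emph{simulation invariant} by a step-indexed induction on the operational semantics. Fix any assignment $\boxed{c_i}\mapsto c_i$ and consider the two executions: $Q[c_1,\dots,c_n]$ started from $\text{main}_Q$, which immediately invokes $\text{main}_{P_Q}$, and $P$ started from $\text{main}_P$ with each $v_i$ initialized to $c_i$. I would show these runs proceed in lockstep---identical control location, call stack, and store on the shared program variables---up to the first moment control reaches the image of $L$, if ever. The essential use of the hypothesis that the $v_i$ are \emph{never directly assigned} in $P$ occurs exactly here: because each $v_i$ retains its initial value $c_i$ throughout the whole run of $P$, replacing every read of $v_i$ by a read of the fixed template parameter $\boxed{c_i}\mapsto c_i$ leaves every evaluated expression unchanged. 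Were the $v_i$ assignable, the substitution would fail to preserve semantics, since a template parameter is a constant whereas a variable may be updated. Every other construct---assignments to the remaining variables, conditionals, loops, and mutually recursive or self-recursive calls, which are renamed $p\mapsto p_Q$ uniformly---is copied verbatim and carries the induction through.

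Second I would establish the \emph{exception invariant}. Because \lt{REACHED} is a fresh exception raised only at the image of $L$ and caught only by the lone \emph{try}--\emph{catch} block in $\text{main}_Q$, the simulation yields that control reaches the \lt{REACHED}-raising statement in $Q$ precisely when the matched execution of $P$ reaches $L$. When it does, \lt{REACHED} propagates up the call stack---past every $p_Q$ frame, none of which handles it---to the handler in $\text{main}_Q$, which returns $1$; if $L$ is never reached, then either $\text{main}_{P_Q}$ returns normally and $\text{main}_Q$ returns $0$, or the computation diverges and $\text{main}_Q$ yields no value. Hence $(Q[c_1,\dots,c_n])()=1$ iff $P$ with $v_i\mapsto c_i$ reaches $L$.

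Ranging over all choices of $c_i$ then gives the equivalence of the two decision problems, and because the witnessing $c_i$ can be read directly off a solved synthesis instance---the synthesizer's assignment to $\boxed{c_i}$ is exactly the reachability input for $v_i$---the witness-producing formulations coincide as well. I expect the step-indexed simulation to be the main obstacle, chiefly the bookkeeping required to track nested and recursive calls together with the non-local jump performed by \lt{REACHED}. This argument is the formal dual of the correctness of $\mathsf{GadgetS2R}$ underlying Theorem~\ref{thsr-simple}, with the roles of variables and template parameters, and of the target location and the single passing test, interchanged.
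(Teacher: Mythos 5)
Your proposal is correct and follows essentially the same route as the paper: a simulation lemma showing that execution of $Q[c_1,\dots,c_n]$ mirrors execution of $P$ with $v_i \mapsto c_i$ up to the first occurrence of $L$ (the paper's Lemma~\ref{lemma-3}, proved by induction on the operational-semantics derivation), followed by the observation that the \lt{REACHED} exception is raised and caught, making $\text{main}_Q$ return $1$, exactly when $L$ is reached (Lemma~\ref{lemma-4}), with the singleton test suite $Q()=1$ collapsing the synthesis condition to that single assertion. Your explicit remark on where the ``never directly assigned'' hypothesis is used is a welcome elaboration the paper leaves implicit, but it does not change the argument.
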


\subsection{Synthesis $\equiv$ Reachability}\label{sec:equiv}

Together, the above two theorems establish the equivalence between the
reachability problem in program verification and the template-based program synthesis.

\begin{corollary}\label{equiv}
The reachability problem in Definition~\ref{defpr} and the template-based synthesis problem in Definition~\ref{defps} are linear-time reducible to each other. 
\end{corollary}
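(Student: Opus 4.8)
The plan is to derive the corollary directly by combining the two one-directional reductions already established. Recall that ``linear-time reducible to each other'' (i.e., equivalence under linear-time many-one reductions) means exactly that there is a linear-time reduction in each direction. Theorem~\ref{thsr-simple} supplies the first direction, mapping any synthesis instance $(Q,S,T)$ to the reachability instance $\mathsf{GadgetS2R}(Q,S,T)$, and Theorem~\ref{thrs-simple} supplies the converse, mapping any reachability instance $(P,L,V)$ to the synthesis instance $\mathsf{GadgetR2S}(P,L,V)$. So the first step is simply to invoke the two theorems and observe that their conjunction is precisely the ``reducible to each other'' clause of the corollary.

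The second step is to confirm that the \emph{linear-time} qualifier holds in both directions simultaneously. For this I would appeal to the complexity analyses accompanying the two gadgets: $\mathsf{GadgetS2R}$ copies each function of $Q$, adds one fresh variable per template parameter, and emits a single guard $e$ whose size is linear in $|T|$, giving a total cost linear in the size of $(Q,S,T)$; symmetrically, $\mathsf{GadgetR2S}$ copies each function of $P$, introduces one template parameter per input variable, and wraps the start function in a fixed-size try-catch block, giving cost linear in the size of $(P,L,V)$. Since each reduction runs in time linear in its input, both directions meet the required bound, and the corollary follows.

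The only point requiring care --- and really the only ``obstacle,'' such as it is --- is to confirm that the two reductions carry the correctness invariants needed to be genuine reductions of the \emph{decision} problems of Definitions~\ref{defps} and~\ref{defpr}, not merely size-preserving syntactic transforms. I would therefore restate, as already argued for each gadget, that $\mathsf{GadgetS2R}$ makes $L$ reachable iff $Q$ can be instantiated to pass $T$, and that $\mathsf{GadgetR2S}$ makes its single test $Q()=1$ satisfiable iff $L$ is reachable in $P$; together these guarantee that yes-instances map to yes-instances and that witnesses transfer. A minor caveat is the exceptions assumption used by $\mathsf{GadgetR2S}$: since the reduction can be rephrased using goto/label or a boolean-tagged return tuple (as noted for BLAST) without changing its asymptotic cost, the linear-time bound is unaffected and the equivalence goes through unconditionally.
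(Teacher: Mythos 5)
Your proposal is correct and matches the paper's own (implicit) argument: the corollary is obtained exactly by combining Theorems~\ref{thsr-simple} and~\ref{thrs-simple} with the linear-complexity observations already stated for $\mathsf{GadgetS2R}$ and $\mathsf{GadgetR2S}$. Your additional remarks on the correctness invariants and the dispensability of exceptions are consistent with what the paper itself notes alongside each reduction.
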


This equivalence is perhaps unsurprising as researchers have long assumed certain relations between program synthesis and verification (e.g., see Section~\ref{sec:related}).
However, we believe that a proof of the equivalence is valuable. 
First, our proof, although straightforward, formally shows that both problems inhabit the same complexity class (e.g., the restricted formulation of synthesis in Definition~\ref{defps} is as hard as the reachability problem in Definition~\ref{defpr}).
Second, although both problems are undecidable in the general case, the linear-time transformations allow existing approximations and ideas developed for one problem to apply to the other one.
Third, in term of practicality, the equivalence allows for direct application of off-the-shelf reachability and verification tools to synthesize and repair programs.  
Our approach is not so different from verification works that transform the interested problems into SAT/SMT formulas to be solved by existing efficient solvers.
Finally, this work can be extended to more complex classes of synthesis and repair problems.
While we demonstrate the approach using linear templates, more general templates can be handled. 
For example, combinations of nonlinear polynomials can be considered using a priority subset of terms (e.g., $t_1=x^2, t_2=xy$, as demonstrated in nonlinear invariant generation~\cite{vuicse2012}).

We hope that these results help raise fruitful cross-fertilization among program verification and synthesis fields that are usually treated separately.
Because our reductions produce reachability problem instances that are rarely encountered by current verification techniques  (e.g., with large guards), they may help refine existing tools or motivate optimizations in new directions.
As an example, our bug repair prototype CETI (discussed in the next Section) has produced reachability instances that hit a crashing bug in KLEE that was confirmed to be important by the developers\footnote{http://mailman.ic.ac.uk/pipermail/klee-dev/2016-February/001278.html}.
These hard instances might be used to evaluate and improve verification and synthesis tools (similar to benchmarks used in annual SAT\footnote{SAT Competitions: http://www.satcompetition.org} and SMT\footnote{SMT competitions: http://smtcomp.sourceforge.net/2016} competitions).

\section{Program Repair using Test-Input Generation}
\label{sec:ceti}

We use the equivalence to develop CETI (Correcting Errors using Test Inputs), a tool for automated program repair (a synthesis problem) using test-input generation techniques (which solves reachability problems). 
We define problem of program repair in terms of template-based program synthesis: 

\begin{definition}
  {\bf Program Repair Problem.}
  Given a program $Q$ that fails at least one test in a finite test suite $T$ and a finite set of parameterized templates $S$, does there exist a set of statements $\{s_i\} \subseteq Q$ and parameter values $c_1, \dots, c_n$ for the templates in $S$ such that $s_i$ can be replaced with $S[c_1, \dots, c_n]$ and the resulting program passes all tests in $T$? 
\end{definition}

This repair problem thus allows edits to multiple program statements (e.g., we can replace both lines 4 and 10 in Figure~\ref{fig1} with parameterized templates).
The \emph{single-edit} repair problem restricts the edits to one statement.

CETI implements the key ideas from Theorem~\ref{thsr-simple} in Section~\ref{s2r} to transform this repair problem into a reachability task solvable by existing verification tools.
Given a test suite and a buggy program that fails some test in the suite, CETI employs the statistical fault localization technique Tarantula~\cite{jones2005empirical} to identify particular code regions for synthesis, i.e., program statements likely related to the defect.
Next, for each suspicious statement and synthesis template, CETI transforms the buggy program, the test suite, the statement and the template into a new program containing a location reachable only when the original program can be repaired.
Thus, by default CETI considers single-edit repairs, but it can be modified to repair multiple lines by using $k$ top-ranked suspicious statements (cf. Angelix~\cite{mechtaev2016angelix}). 
Such an approach increases the search space and thus the computational burden placed on the reachability solver.  

Our current implementation employs CIL~\cite{necula02cil} to parse and modify C programs using repair templates similar to those given in~\cite{konighoferautomated,nguyen2013semfix}.
These templates allow modifying constants, expressions (such as the linear template shown in Section~\ref{sec:motiv}), and logical, comparisons, and arithmetic operators (such as changing $||$ to $\&\&$, $\le$ to $ < $, or $+$ to $-$).
Finally, we send the transformed program to the test-input generation tool KLEE, which produces test values that can reach the designated location. 
Such test input values, when combined with the synthesis template and the suspicious statement, correspond exactly to a patch that repairs the bug.
CETI synthesizes correct-by-construction repairs, i.e., the repair, if found, is guaranteed to pass the test suite.

\subsection{Evaluation}
\label{sectexps}
To evaluate CETI, we use the {\tt Tcas} program from the SIR benchmark~\cite{SIR}.
The program, which implements an aircraft traffic collision avoidance system, has 180 lines of code and 12 integer inputs.
The program comes with a test suite of about 1608 tests and 41 faulty functions, consisting of seeded defects such as changed operators, incorrect constant values, missing code, and incorrect control flow. 
Among the programs in SIR, {\tt Tcas} has the most introduced defects (41), and it has been used to benchmark modern bug repair techniques~\cite{debroy2010using,konighofer2013repair,nguyen2013semfix}.

We manually modify {\tt Tcas}, which normally prints its result on the screen, to instead return its output to its caller, e.g., \lt{printf(\"output is \%d\\n\",v)} becomes {\tt return v}.
For efficiency, many repair techniques initially consider a smaller number of tests in the suite and then verify candidate repairs on the entire suite~\cite{nguyen2013semfix}. 
In contrast, we use all available tests at all times to guarantee that any repair found by CETI is correct with respect to the test suite. 
We find that modern tools such as KLEE can handle the complex conditionals that encode such information efficiently and generate the desired solutions within seconds. 

The behavior of CETI is controlled by customizable parameters.
For the experiments described here, we consider the top $n=80$ from the ranked list of suspicious statements and, then apply the predefined templates to these statements.
For efficiency, we restrict synthesis parameters to be within certain value ranges: constant coefficients are confined to the integral range $[-100000, 100000]$ while the variable coefficients are drawn from the set $\{-1,0,1\}$.

\subsubsection*{Results.}
Table~\ref{results} shows the results with 41 buggy {\tt Tcas} versions.
These experiments were performed on a 32-core 2.60GHz Intel Linux system with 128 GB of RAM.
Column {\bfseries Bug Type} describes the type of defect.
\emph{Incorrect Const} denotes a defect involving the use of the wrong constant, e.g., 700 instead of 600.
\emph{Incorrect Op} denotes a defect that uses the wrong operator for arithmetic, comparison, or logical calculations, e.g., $\ge$ instead of $>$.
\emph{Missing code} denotes defects that entirely lack an expression or statement, e.g., $a\&\&b$ instead of $a\&\&b||c$ or {\tt return a} instead of {\tt return a+b}.
\emph{Multiple} denotes defects caused by several actions such as missing code at a location and using an incorrect operator at another location.
Column {\bfseries T(s)} shows the time taken in seconds. 
Column {\bfseries R-Prog} lists the number of reachability program instances that were generated and processed by KLEE.
Column {\bfseries Repair?} indicates whether a repair was found.

\begin{table}[t]
  \caption{Repair Results for 41 \textsf{Tcas} Defects}
  \centering
  {\footnotesize
    \begin{tabular}{r|l|c|c|c||r|l|c|c|c}
      &\textbf{Bug Type} & \textbf{R-Progs} &\textbf{T(s)}&\textbf{Repair?}
      &&\textbf{Bug Type} & \textbf{R-Progs} &\textbf{T(s)}&\textbf{Repair?}\\
      \hline
       v1&incorrect op           &6143&21 &\checkmark&     v22&missing code      	 &5553&175&--\\            
       v2&missing code		 &6993&27 &\checkmark&     v23&missing code      	 &5824&164&--\\            
       v3&incorrect op      	 &8006&18 &\checkmark&     v24&missing code      	 &6050&231&--\\            
       v4&incorrect op		 &5900&27 &\checkmark&     v25&incorrect op		 &5983&19 &\checkmark\\    
       v5&missing code		 &8440&394&--        &     v26&missing code		 &8004&195&--\\            
       v6&incorrect op		 &5872&19 &\checkmark&     v27&missing code		 &8440&270&--\\            
       v7&incorrect const	 &7302&18 &\checkmark&     v28&incorrect op 		 &9072&11 &\checkmark\\    
       v8&incorrect const	 &6013&19 &\checkmark&     v29&missing code		 &6914&195&--\\            
       v9&incorrect op 		 &5938&24 &\checkmark&     v30&missing code		 &6533&170&--\\            
      v10&incorrect op 		 &7154&18 &\checkmark&     v31&multiple	         &4302&16 &\checkmark\\    
      v11&multiple	         &6308&123&--        &     v32&multiple	         &4493&17 &\checkmark\\    
      v12&incorrect op 		 &8442&25 &\checkmark&     v33&multiple               &9070&224&--\\            
      v13&incorrect const 	 &7845&21 &\checkmark&     v34&incorrect op           &8442&75 &\checkmark\\    
      v14&incorrect const 	 &1252&22 &\checkmark&     v35&multiple    	         &9070&184&--\\            
      v15&multiple	         &7760&258&--        &     v36&incorrect const	 &6334&10 &\checkmark\\    
      v16&incorrect const 	 &5470&19 &\checkmark&     v37&missing code           &7523&174&--\\            
      v17&incorrect const 	 &7302&12 &\checkmark&     v38&missing code  	 &7685&209&--\\            
      v18&incorrect const 	 &7383&18 &\checkmark&     v39&incorrect op		 &5983&20 &\checkmark\\    
      v19&incorrect const 	 &6920&19 &\checkmark&     v40&missing code      	 &7364&136&--\\            
      v20&incorrect op		 &5938&19 &\checkmark&     v41&missing code      	 &5899&29 &\checkmark\\
      v21&missing code      	 &5939&31 &\checkmark&        &&&
    \end{tabular}
  }
  \label{results}
\end{table}

We were able to correct 26 of 41 defects, including multiple defects of different types. On average, CETI takes 22 seconds for each successful repair. 
The tool found 100\% of repairs for which the required changes are single edits according to one of our predefined templates (e.g., generating arbitrary integer constants or changing operators at one location). 
In several cases, defects could be repaired in several ways.
For example, defect $v_{28}$ can be repaired by swapping the results of both branches of a conditional statement or by inverting the conditional guard. 
CETI also obtained unexpected repairs. 
For example, the bug in $v_{13}$ is a comparison against an incorrect constant; the buggy code reads {\tt < 700} while the human-written patch reads {\tt < 600}. 
Our generated repair of {\tt < 596} also passes all tests. 

We were not able to repair 15 of 41 defects, each of which requires edits at multiple locations or the addition of code that is beyond the scope of the current set of templates.  
As expected, CETI takes longer for these programs because it tries all generated template programs before giving up. 
One common pattern among these programs is that the bug occurs in a macro definition, e.g., {\tt \#define C = 100} instead of {\tt \#define C = 200}. 
Since the CIL front end automatically expands such macros, CETI would need to individually fix each use of the macro in order to succeed.
This is an artifact of CIL, rather than a weakness inherent in our algorithm. 

CETI, which repairs 26 of 41 {\tt Tcas} defects, performs well compared to other reported results from repair tools on this benchmark program. 
GenProg, which finds edits by recombining existing code, can repair 11 of these defects~\cite[Tab.~5]{nguyen2013semfix}.
The technique of Debroy and Wong, which uses random mutation, can repair 9 defects~\cite[Tab.~2]{debroy2010using}.
FoREnSiC, which uses the concolic execution in CREST, repairs 23 defects~\cite[Tab.~1]{konighofer2013repair}.
SemFix out-performs CETI, repairing 34 defects~\cite[Tab.~5]{nguyen2013semfix}, but also uses fifty test cases instead of the entire suite of thousands\footnote{Thus CETI's repairs, which pass the entire suite instead of just 50 selected tests, meet a higher standard. We were unable to obtain SemFix details, e.g., which 50 tests, online or from the authors.}.
Other repair techniques, including equivalence checking~\cite{konighofer2013repair} and counterexample guided refinement~\cite{konighofer2013repair}, repair 15 and 16 defects, respectively. 

Although CETI uses similar repair templates as both SemFix and FoREnSiC, the repair processes are different.
SemFix directly uses and customizes the KLEE symbolic execution engine, and FoRenSiC integrates concolic execution to analyze programs and SMT solving to generate repairs.
In contrast, CETI eschews heavyweight analyses, and it simply generates a reachability instance. 
Indeed, our work is inspired by, and generalizes, these works, observing that the whole synthesis task can be offloaded with strong success in practice.

However, there is a trade-off: customizing a reachability solver to the task of program repair may increase the performance or the number of repairs found, but may also reduce the generality or ease-of-adoption of the overall technique. 
We note that our unoptimized tool CETI already outperforms published results for GenProg, Debroy and Wong, and FoREnSiC on this benchmark, and is competitive with SemFix. 

\subsubsection*{Limitations.}

We require that the program behaves deterministically on the test cases and that the defect be reproducible. 
This limitation can be mitigated by running the test cases multiple times, but ultimately our technique is not applicable if the program is non-deterministic.
We assume that the test cases encode all relevant program requirements. 
If adequate test cases are not available then the repair may not retain required functionality.
Our formulation also encodes the test cases as inputs to a starting function (e.g., {\tt main}) with a single expected output.
This might not be feasible for certain types of specifications, such as liveness properties (``eventually'' and ``always'') in temporal logic.
The efficiency of CETI depends on fault localization to reduce the search space.
The reachability or test-input generation tool used affects both the efficiency and the efficacy of CETI. 
For example, if the reachability tool uses a constraint solver that does not support data types such as string or arrays then we will not be able to repair program defects involving those types. 
Finally, we assume that the repair can be constructed from the provided repair templates. 

The reduction in Section~\ref{s2r} can transform a finite space (buggy) program into an infinite space reachability problem (e.g., we hypothesize that a bounded loop guard $i \le 10$ is buggy and try to synthesize a new guard using an unknown parameter $i \le \BOX{c}$).
However, this does not invalidate the theoretical or empirical results and the reduction is efficient in the program size and the number of tests.
The reduction also might not be optimal if we use complex repair templates (e.g., involving many unknown parameters).
In practice we do not need to synthesize many complex values for most defects and thus modern verification tools such as KLEE can solve these problems efficiently, as shown in our evaluation.

This paper concretely demonstrates the applicability of program reachability (test-input generation) to program synthesis (defect repair) but not the reverse direction of using program synthesis to solve reachability. 
Applying advances in automatic program repair to find test-inputs to reach nontrivial program locations remains future work.

\section{Related Work}\label{sec:related}
\subsubsection*{Program Synthesis and Verification.} 
Researchers have long hypothesized about the relation between program synthesis and verification and proposed synthesis approaches using techniques or tools often used to verify programs such as constraint solving or model checking~\cite{alur2015syntax,Srivastava:2010:PVP:1707801.1706337}.
For example, Bodik and Solar-Lezama et. al.'s work~\cite{solar-pldi07,lezama2008program} on sketching defines the synthesis task as: $\exists c~.~\forall(i,o)~.~ \in T ~.~ (P[c])(i) = o$ (similar to our template-based synthesis formulation in Definition~\ref{defps}) and solves the problem using a SAT solver.
Other synthesis and program repair researches, e.g.,~\cite{attie2015model,mechtaev2016angelix,nguyen2013semfix,Srivastava:2010:PVP:1707801.1706337,srivastava2013template}, also use similar formulation to integrate verification tools, e.g., test-input generation, to synthesize desired programs.
In general, such integrations are common in many ongoing synthesis works including the multi-disciplinary ExCAPE project~\cite{excape} and the SyGuS competition~\cite{sygus}, and have produced many practical and useful tools such as Sketch that generates low-level bit-stream programs~\cite{lezama2008program}, Autograder that provides feedback on programming homework~\cite{singh2013automated}, and FlashFill that constructs Excel macros~\cite{flashfill,flashfill2}.

The work presented in this paper is inspired by these works, and generalizes them by establishing a formal connection between synthesis and verification using the template-based synthesis and reachability formulations. 
We show that it is not just a coincident that the aforementioned synthesis works can exploit verification techniques, but that every template-based synthesis problem can be reduced to the reachability formulation in verification.
Dually, we show the other direction that reduces reachability to template-based synthesis, so that every reachability problem can be solved using synthesis.
Furthermore, our constructive proofs describe efficient algorithms to do such reductions.

\subsubsection*{Program Repair and Test-Input Generation.}
Due to the pressing demand for reliable software, automatic program repair has steadily gained research interests and produced many novel repair techniques.
\emph{Constraint-based} repair approaches, e.g., AFix~\cite{Jin:2011:AAF:1993498.1993544}, Angelix~\cite{mechtaev2016angelix}, SemFix~\cite{nguyen2013semfix}, FoRenSiC~\cite{bloem2013forensic}, Gopinath et al.~\cite{gopinath2011specification}, Jobstmann et al.~\cite{jobstmann2005program}, generate constraints and solve them for patches that are correct by construction (i.e., guaranteed to adhere to a specification or pass a test suite).
In contrast, \emph{generate-and-validate} repair approaches, e.g., GenProg~\cite{icse09}, Pachika~\cite{1747554}, PAR~\cite{kim2013automatic}, Debroy and Wong~\cite{debroy2010using}, Prophet~\cite{long2016automatic}, find multiple repair candidates (e.g., using stochastic search or invariant inferences) and verify them against given specifications.

The field of test-input generation has produced many practical techniques and tools to generate high coverage test data for complex software, e.g., fuzz testing~\cite{fuzz,forrester2000empirical}, symbolic execution~\cite{cadar2008klee,cadar2013symbolic}, concolic (combination of static and dynamic analyses) execution~\cite{godefroid2005dart,sen2006cute}, and software model checking~\cite{beyer2007software,ball2002s}.
Companies and industrial research labs such as Microsoft, NASA, IBM, and Fujitsu have also developed test-input generation tools to test their own products~\cite{anand2007jpf,artzi2008finding,godefroid2008automated,li2011klover}.
Our work allows program repair and synthesis approaches directly apply these techniques and tools.

\section{Conclusion}
We constructively prove that the template-based program synthesis problem and the reachability problem in program verification are equivalent.
This equivalence connects the two problems and enables the application of ideas, optimizations, and tools developed for one problem to the other.
To demonstrate this, we develop CETI, a tool for automated program repair using test-input generation techniques that solve reachability problems. 
CETI transforms the task of synthesizing program repairs to a
reachability problem, where the results produced by a test-input
generation tool correspond to a patch that repairs the original program.
Experimental case studies suggest that CETI has higher success rates than many other standard repair approaches.

\subsubsection*{Acknowledgments.} 
This research was partially supported by NSF awards CCF 1248069, CNS 1619098, CNS 1619123, as well as AFOSR grant FA8750-11-2-0039 and DARPA grant FA8650-10-C-7089.

\newpage
\bibliographystyle{abbrv}
\bibliography{bibs.bib}

\ifproofs

\newpage
\section{Appendix: Correctness Proofs}

\subsection{Formal Semantics}\label{semantics}

We use standard operational semantics to reason about the meanings and
executions of programs~\cite{plotkin04}. A state $\sigma$ maps 
variables to values. Given a program $P$, the \emph{large-step} judgment 
$\langle P, \sigma \rangle \Downarrow \sigma'$ means that $P$, when
executed in an initial state $\sigma$, terminates in state $\sigma'$. 
The \emph{small-step} judgment $\langle P, \sigma \rangle \rightarrow
\langle P', \sigma' \rangle$ means that if $P$ is executed in state
$\sigma$ for one single step, it transitions to (or reduces to) program $P'$
in state $\sigma'$. We write $\rightarrow^*$ for the reflexive, transitive
closure of that relation (i.e., zero or more steps). Both large-step and
small-step judgments are proved or constructed using
\emph{derivations}, tree-structured mathematical objects made of
well-formed instances of inference rules. We write $D :: \dots$ 
to indicate that $D$ is a derivation proving a judgment. 
There is often only one inference rule for each statement type (e.g.,
one assignment rule for reasoning about assignment statements), which
admits reasoning by \emph{inversion}: for example, if $D ::
\langle P, \sigma \rangle \Downarrow \sigma'$ and $P$ is an assignment
statement, then $D$ must be the assignment rule (and not, for example, the
while loop rule), and vice-versa. 

We write $\sigma \models B$ to denote that a state $\sigma$ makes a
predicate $B$ true. We write $\mathit{wp}(P, B)$ for the weakest
precondition of $P$ with respect to postcondition $B$.  If $A =
\mathit{wp}(P, B)$, then if $P$ executes in a state that makes the
precondition $A$ true and it terminates, then it terminates in a state
making postcondition $B$ true. 

We use large-step operational semantics when reasoning about synthesis,
where we focus on the final value of the program (i.e., its behavior on a
test).  We use small-step operational semantics when reasoning about
reachability, where intermediate steps matter (i.e., did
the program execution visit a particular label?). We use induction on the structure of a
derivation to show that a property holds for all executions of all programs
(informally, if the property holds for one-step programs like
\lt{skip}, and it also holds whenever one more execution step is added,
then it always holds). We use weakest preconditions to reason about 
special conditional statements that encode test cases. 

Our use of small-step and large-step semantics, as well as structural 
induction on derivations and preconditions, is standard (i.e., the proofs
and the constructions are novel, but not the proof machinery). The reader is
referred to Plotkin~\cite{plotkin04} for a thorough introduction.

\subsection{Reducing Synthesis to Reachability}\label{s2r_proof}

We prove correct the constructive reduction given in Section~\ref{s2r}.
The correctness of $\mathsf{GadgetS2R}$ hinges on two key invariants. 
First, function calls in the derived program $P$ 
have the same behavior as template functions in the original program $Q$.
Second, location $L$ is reachable if and only if values $c_i$ can be
assigned to variables $v_i$ 
such that $Q$ passes all of the tests. Formally, we say that 
$\mathsf{GadgetS2R}$ maintains the invariants described by the following
Lemmas. 
For brevity, we write
$\sigma_1 \simeq_V \sigma_2$ to denote
$\forall x .\;x \not \in V \Rightarrow \sigma_1(x) = \sigma_2(x)$ --- that is,
state equivalence modulo a set of variables $V$. 

We first show that $p_q(\mathrm{input})$ behaves as
$q[c_1, \dots, c_n](\mathrm{input})$, when the new $v_i$
variables in $P$ are assigned the values $c_i$. 
Formally, 
\begin{lemma}\label{lemma-execution} 
Let $(P, L) = \mathsf{GadgetS2R}(Q,S,T)$. 
For all states $\sigma_1, \sigma_2, \sigma_3$, 
all values $c_1, \dots, c_n, i$, and all functions $q \in Q$,
if 
$\sigma_1(v_i) = c_i$  
then 
$D_1 :: \langle p_q(i), \sigma_1 \rangle \Downarrow \sigma_2$ 
if both
$D_2 :: \langle q[c_1, \dots, c_n](i), \sigma_1 \rangle \Downarrow \sigma_3$
and 
$\sigma_2 \simeq_{\{v_1,\dots,v_n\}} \sigma_3$. 
\end{lemma}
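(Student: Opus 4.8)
The plan is to prove the lemma by structural induction on the derivation, exploiting the fact that the transformation $\mathsf{GadgetS2R}$ is purely syntactic and structure-preserving: the body of $p_q$ differs from the instantiated template $q[c_1, \dots, c_n]$ only in that each occurrence of the literal value $c_i$ (obtained from instantiating the hole $\boxed{c_i}$) is replaced by a read of the fresh global variable $v_i$. Every other construct --- assignments, conditionals, loops, and function calls --- is mapped to the identical construct, so a derivation for $p_q$ and a derivation for $q[c_1, \dots, c_n]$ have the same tree shape up to this leaf-level substitution.

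First I would establish the load-bearing invariant that the variables $v_i$ are \emph{never assigned} anywhere in $P$. This follows directly from Steps~1 and~2 of the construction: the $v_i$ are introduced only to stand in for read-only template parameters, which never appear on the left-hand side of an assignment. Consequently, for any execution started in a state $\sigma_1$ with $\sigma_1(v_i) = c_i$, every intermediate state continues to satisfy $\sigma(v_i) = c_i$. Next I would prove a small expression-level auxiliary claim: if $\sigma(v_i) = c_i$ for all $i$, then any expression appearing in $p_q$ evaluates in $\sigma$ to exactly the value that the corresponding expression in $q[c_1, \dots, c_n]$ evaluates to, since the only difference is that the former reads $v_i$ where the latter has the literal $c_i$, and these coincide by hypothesis.

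With these in hand the induction proceeds rule by rule using inversion (justified by the one-rule-per-construct discipline noted in the preliminaries): for an assignment $x := e$, both executions compute the same right-hand value by the auxiliary claim and update the same variable $x \neq v_i$, preserving $\simeq_{\{v_1,\dots,v_n\}}$; for conditionals and loops the guards evaluate identically, so the same branch (or the same number of iterations) is taken and the inductive hypothesis applies to each sub-derivation; for a function call, the callee $q'$ has a corresponding $q'_P$ subject to the same invariant, so the inductive hypothesis applies to the sub-derivation of the call. Determinism of the large-step relation then yields the biconditional: a terminating derivation $D_1$ for $p_q(i)$ exists exactly when a terminating derivation $D_2$ for $q[c_1,\dots,c_n](i)$ exists, and their final states agree modulo $\{v_1,\dots,v_n\}$.

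The main obstacle I anticipate is threading the state-equivalence $\simeq_{\{v_1,\dots,v_n\}}$ cleanly through the recursive cases while simultaneously maintaining the stronger pointwise fact $\sigma(v_i) = c_i$ that the expression-level claim requires. Because the equivalence relation explicitly quotients out the $v_i$, I must carry the pointwise equality as a \emph{separate} conjunct of the induction hypothesis --- it is preserved precisely because of the never-assigned invariant --- rather than folding it into $\simeq_{\{v_1,\dots,v_n\}}$. Handling function calls correctly, so that the $v_i$ values remain visible as globals inside callees and are neither shadowed nor reset across the call stack, is the one place where the interaction between the global-variable encoding and procedure calls needs explicit attention.
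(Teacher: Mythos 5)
Your proposal is correct and follows essentially the same route as the paper: structural induction on the large-step derivation, exploiting the exact syntactic correspondence between $p_q$ and $q[c_1,\dots,c_n]$, with the template-parameter read versus global-variable read as the single non-trivial case closed by the hypothesis $\sigma_1(v_i)=c_i$. Your explicit strengthening of the induction hypothesis to carry the pointwise invariant $\sigma(v_i)=c_i$ (justified by the $v_i$ never being assigned) is a detail the paper's proof glosses over but implicitly relies on, so it is a welcome refinement rather than a departure.
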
 

\begin{proof}
The proof proceeds by induction on
the structure of the operational semantics derivation of $D_1$. 
Let $\sigma_1$ be arbitrary with $\sigma_1(v_i) = c_i$. 

Note that by Step 2 of the $\mathsf{GadgetS2R}$ construction, each
$p_q(i) \in P$ corresponds to $q[c_1, \dots, c_n](i) \in
Q$ in a particular manner. Indeed, all subexpressions in $p_q$ and $q$ are
identical \emph{except} for one case. References to template
parameters $\boxed{c_i}$ in the template program $Q$ correspond exactly to
references to variables $v_i$ in the derived reachability program
$P$. 

Thus, by inversion, the structure of $D_1$ corresponds exactly to the
structure of any $D_2$
except for variable references.  
The inductive proof considers all of the cases 
for the derivation
$D_1$. For brevity, we show only the non-trivial case.

{\sc Case 1 (Template Variable Read)}. Suppose $D_1$ is:
$$
\infer[\textsf{assign-variable}]
{
\langle a := v_i, \sigma_1 \rangle \Downarrow \sigma_2
}
{
\sigma_2 = \sigma_1[a \mapsto \sigma_1(v_i)]
} 
$$
By inversion and the construction of $P$, $D_2$ is: 
$$
\infer[\textsf{assign-template-parameter}]
{
\langle a := \boxed{c_i}, \sigma_1 \rangle \Downarrow \sigma_3
}
{
\sigma_3 = \sigma_1[a \mapsto c_i]
}
$$
Note that in $D_1$, $v_i$ is a normal program variable expression
(referring to a variable introduced in Step 1 of the construction), 
while in in $D_2$, $\boxed{c_i}$ is a reference to a template parameter. We
must show that
$\sigma_2 \simeq_{\{v_1,\dots,v_n\}} \sigma_3$. 
Since $\sigma_2$ and $\sigma_3$ agree with $\sigma_1$ on all variables
except $a$, it only remains to show that
$\sigma_2(a) = \sigma_3(a)$ (since $a \not \in \{v_1,\dots,v_n\}$). 
By the $\sigma$ definitions from $D_1$ and $D_2$, that obligation
simplifies to showing that $\sigma_1(v_i) = c_i$, which was part of
the input formulation for this Lemma. (Intuitively, 
$\sigma_1(v_i) = c_i$ means that the reachability analysis assigned
the values $c_i$ to each variable $v_i$.)

The other cases, which do not involve the template parameters $v_i$, 
are direct (the inference rule used in $D_1$ will exactly mirror the 
inference rule used in $D_2$). 
\end{proof} 

Having established that the executions of $P$'s functions mirror
$Q$'s functions (modulo the
template parameters, which are held constant), we now establish that
reaching $L$ in $P$ via assigning each $v_i$ the value $c_i$ 
corresponds to 
$Q[c_1, \dots, c_n]$ passing all of the tests. 

\begin{lemma}\label{lemma-wp} 
Let $(P, L) = \mathsf{GadgetS2R}(Q,S,T)$. 
The execution of $P$ reaches $L$ starting
from state $\sigma_1$ if and only if  
$ \sigma_1 \models 
  wp(Q[c_1, \dots, c_n](i_1), \mathrm{result} = o_1)
  \wedge 
  \dots
  \wedge 
  wp(Q[c_1, \dots, c_n](i_m), \\ \mathrm{result} = o_m) 
$, where
$\sigma_1(v_i) = c_i$, 
$\mathrm{result}$ denotes the return value of $Q$, 
and $wp$ is the weakest precondition. 
\end{lemma}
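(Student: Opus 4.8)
The plan is to peel back the construction of $\text{main}_P$, reduce the reachability of $L$ to the truth of the guard $e$, evaluate $e$ test-by-test using Lemma~\ref{lemma-execution} to transfer each conjunct between $P$ and $Q[c_1,\dots,c_n]$, and finally re-express ``the call returns $o_j$'' as a weakest precondition. I read $wp$ in the total-correctness sense: $\sigma \models wp(R, B)$ holds exactly when $\langle R, \sigma\rangle \Downarrow \sigma'$ for some $\sigma'$ with $\sigma' \models B$, so that $wp$ already encodes termination. Because Step 3 of $\mathsf{GadgetS2R}$ makes the body of $\text{main}_P$ a single conditional whose then-branch is $L$, inversion on the small-step semantics of that conditional gives the first equivalence: starting from $\sigma_1$, execution reaches $L$ if and only if the guard $e$ evaluates to true in $\sigma_1$.

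First I would handle the guard. Recall that $e = \bigwedge_{j=1}^{m} \big(\text{main}_{Q_P}(i_j) = o_j\big)$ is a conjunction of equalities, each of which calls the derived starting function on a test input. The guard evaluates to true in $\sigma_1$ exactly when every conjunct does, which in turn forces each call $\text{main}_{Q_P}(i_j)$ to terminate (in $\sigma_1$, with the globals $v_i$ read as $c_i = \sigma_1(v_i)$) and to return precisely $o_j$. I would note that short-circuit evaluation of $\&\&$ does not interfere: the whole conjunction is true iff every conjunct is, and that is exactly the case relevant for reaching $L$ (if some conjunct fails, $e$ is false and $L$ is not reached, consistently on both sides of the biconditional).

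Next I would apply Lemma~\ref{lemma-execution} to each conjunct, instantiating $q$ as $\text{main}_Q$. Since $\sigma_1(v_i) = c_i$, the lemma equates $\langle \text{main}_{Q_P}(i_j), \sigma_1\rangle \Downarrow \sigma_2$ with $\langle Q[c_1,\dots,c_n](i_j), \sigma_1\rangle \Downarrow \sigma_3$ together with $\sigma_2 \simeq_{\{v_1,\dots,v_n\}} \sigma_3$. Because the return value $\mathrm{result}$ is not among the $v_i$, state-equivalence modulo $\{v_1,\dots,v_n\}$ guarantees the two executions agree on $\mathrm{result}$. Hence ``$\text{main}_{Q_P}(i_j)$ terminates returning $o_j$'' is equivalent to ``$Q[c_1,\dots,c_n](i_j)$ terminates returning $o_j$'', which by the total-correctness reading of $wp$ is exactly $\sigma_1 \models wp(Q[c_1,\dots,c_n](i_j), \mathrm{result} = o_j)$. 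Chaining these equivalences and conjoining over $j = 1,\dots,m$ yields the claim; if a fully formal treatment of the conjunction is wanted, I can fold it in by a trivial induction on $m$.

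The main obstacle I anticipate is the bookkeeping at the seam between the two semantic views: reachability of $L$ is a small-step, control-flow property of $\text{main}_P$, whereas $wp$ is a large-step, input/output property of $Q[c_1,\dots,c_n]$. The load-bearing observation is that reaching the then-branch requires the guard to be \emph{evaluated to true}, and that evaluation is itself a large-step evaluation of each function call; this is precisely what lets Lemma~\ref{lemma-execution} and the definition of $wp$ engage. Care is also needed to keep the equivalence genuinely bidirectional, i.e., to use the ``only if'' reading of Lemma~\ref{lemma-execution} for the forward implication (a reachable $L$ certifies terminating, correct calls) and the ``if'' reading for the converse (satisfied preconditions certify that the guard evaluates to true and control flows to $L$), and to confirm that holding each $v_i$ fixed at $c_i$ throughout is exactly the hypothesis supplied by $\sigma_1(v_i) = c_i$.
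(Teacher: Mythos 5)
Your proposal is correct and follows essentially the same route as the paper's proof: reduce reachability of $L$ to the truth of the guard $e$ built in Step~3 of $\mathsf{GadgetS2R}$, split $e$ into per-test conjuncts, transfer each conjunct from the derived function to $Q[c_1,\dots,c_n]$ via Lemma~\ref{lemma-execution} and the $\simeq_{\{v_1,\dots,v_n\}}$ equivalence (noting $\mathrm{result}\notin\{v_1,\dots,v_n\}$), and reconnect to $wp$ by its soundness and completeness with respect to the operational semantics. Your explicit insistence on the total-correctness reading of $wp$ (so that satisfying the precondition certifies termination of each call) is a small but welcome sharpening of a point the paper leaves implicit.
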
 

\begin{proof}
By construction, there is exactly one location $L$ in $P$: 
``if $e$: [L]'' is the body of $p_\text{gadget} \in P$. 
From Step 3,
$e$ has the general form $f(i_1) = o_1 \wedge \dots 
\wedge f(i_m) = o_m$.  
By 
the standard weakest precondition definitions for if,
conjunction, equality and function calls, we have
that $L$ is reachable if and only if 
$\sigma_1 \models
  wp(p_\textrm{gadget}(i_i), 
  \mathrm{result} = o_1)
  \wedge 
  \dots
  \wedge 
  wp(p_\textrm{gadget}(i_m), \mathrm{result} = o_m)$. 
That is, the label is reachable iff the derived gadget program passes the
tests (with variables $v_i \mapsto c_i$). We now show that this occurs if
and only if the original template program passes the tests (with template
parameters $\boxed{c_i} \mapsto c_i$). 

By conjunction elimination, for each test $(i,o)$ we have 
$\sigma_1 \models
  wp(p_\textrm{gadget}(i), \mathrm{result} = o)$. 
By the soundness and completeness of weakest preconditions with respect to
operational 
semantics,\footnote{Our approach is thus sound and relatively
complete in practice. An implementation would use provability ($\vdash$),
such as from an SMT solver, instead of truth ($\models$). While sometimes
sound on restricted domains, decision procedures are not complete in
general, in which case a failure to solve the reachability problem $P$ does
\emph{not} imply that the synthesis problem $Q$ has no solution.} 
we have that 
$\langle p_\textrm{gadget}(i), \sigma_1 \rangle
  \Downarrow \sigma_2 $ iff $\sigma_2 \models \mathrm{result} = o$
  (equivalently, $\sigma_2(\mathrm{result}) = o$). 
By Lemma~\ref{lemma-execution}, we have 
$\langle Q[c_1, \dots, c_n](i), \sigma_1 \rangle
  \Downarrow \sigma_3 $ with state 
$\sigma_2 \simeq_{\{v_1,\dots,v_n\}} \sigma_3$. Since
$\mathrm{result} \not \in \{v_1, \dots, v_n\}$, we have
$\sigma_3(\mathrm{result}) = \sigma_2(\mathrm{result}) = o$. 
That is, the template program $Q$, filled in with the concrete values $c_i$,
when run on any test input $i$ yields the expected result $o$. This occurs
if and only if an execution of $P$ reaches $L$ starting
in a state that maps each $v_i$ to $c_i$. 
\end{proof} 

This Lemma leads directly to the main result for this direction of the
reduction. 

\begin{theorem}\label{thsr}
Let $Q$ be a program with template parameters
$S=\{\boxed{c_1}, \dots, \boxed{c_n}\}$ and a test suite $T=\{
(i_1,o_1), \dots \}$. 
Let $(P, L) = \mathsf{GadgetS2R}(Q,S,T)$. 
Then there exist parameter values $c_i$ such that $\forall (i,o)
\in T ~.~ (Q[c_1, \dots, c_n])(i) = o$
if and only if 
there exist input values $t_i$ such that 
the execution of $P$ with $v_i \mapsto t_i$ reaches $L$. 
That is, the template-based synthesis problem in Definition~\ref{defps} is reducible to the reachability problem in
Definition~\ref{defpr}.
\end{theorem}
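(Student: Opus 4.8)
The plan is to obtain the theorem as an almost immediate consequence of Lemma~\ref{lemma-wp}, which has already done the substantive work. That lemma fixes an initial state $\sigma_1$ with $\sigma_1(v_i) = c_i$ and establishes the per-assignment biconditional: the execution of $P$ reaches $L$ from $\sigma_1$ if and only if the conjunction $\bigwedge_{(i,o)\in T} wp(Q[c_1,\dots,c_n](i),\ \mathrm{result}=o)$ holds at $\sigma_1$. Since the soundness and completeness of weakest preconditions (invoked in that lemma's proof) identify this conjunction with $Q[c_1,\dots,c_n]$ passing every test in $T$, the theorem is precisely the existential closure of Lemma~\ref{lemma-wp} over the choice of values, and the only remaining work is to align the two witnesses.

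For the forward direction I would assume synthesis values $c_1,\dots,c_n$ with $(Q[c_1,\dots,c_n])(i)=o$ for every $(i,o)\in T$, choose $t_i = c_i$, and take the state $\sigma_1$ with $\sigma_1(v_i)=t_i$. ``Passing all tests'' is exactly the right-hand side of Lemma~\ref{lemma-wp} evaluated at $\sigma_1$, so the lemma yields that $P$ with $v_i\mapsto t_i$ reaches $L$, giving the reachability witness. For the backward direction I would assume input values $t_1,\dots,t_n$ such that $P$ reaches $L$ when $v_i\mapsto t_i$, set $c_i = t_i$, and apply the other direction of Lemma~\ref{lemma-wp}: reaching $L$ forces the weakest-precondition conjunction to hold, hence $Q[c_1,\dots,c_n]$ passes every test, giving the synthesis witness. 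In both directions the crucial observation is that Step~1 of $\mathsf{GadgetS2R}$ introduces exactly one fresh variable $v_i$ per template parameter $\boxed{c_i}$, so the two witnesses are literally the same tuple of values and no translation is required.

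Given the lemmas, there is no serious obstacle remaining; the proof is bookkeeping around a single existential quantifier. The one point I would be careful to state explicitly is the identification between the two problem formulations: Definition~\ref{defps} asks for \emph{template-parameter} values while Definition~\ref{defpr} asks for \emph{input} values initializing $\{v_1,\dots,v_n\}$, and under the construction these coincide, so the same tuple serves as a witness on both sides. I would also note that the bridge between the large-step reasoning used to evaluate the tests and the small-step notion of reaching $L$ (including the consistent treatment of divergence, where a non-terminating test evaluation both fails to satisfy $Q[c_1,\dots,c_n](i)=o$ and prevents the guard $e$ from ever being evaluated) is already supplied inside Lemma~\ref{lemma-wp}. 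Consequently the theorem itself need not re-derive any of this machinery; it only has to quote the lemma once in each direction.
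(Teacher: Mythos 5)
Your proposal is correct and matches the paper's own proof, which simply invokes Lemma~\ref{lemma-wp} with $t_i$ equal to $c_i$; your additional remarks about witness alignment and the large-step/small-step bridge are just explicit statements of what the paper leaves implicit. No changes needed.
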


\begin{proof}
This follows directly from Lemma~\ref{lemma-wp} with $t_i$ equal 
to $c_i$. 
\end{proof}

\subsection{Reducing Reachability to Synthesis}\label{r2s_proof}
We now address the correctness of the constructive reduction given in Section~\ref{r2s}.
The correctness of $\mathsf{GadgetR2S}$ also relies on two key invariants. 
First, for any $c_i$, execution in the derived templated
program $Q$ with $\boxed{c_i} \mapsto c_i$ mirrors execution in $P$ with
$v_i \mapsto c_i$
up to the point when $L$ is reached (if ever).
Second, the exception \lt{REACHED} is raised
in $Q$ iff location $L$ is reachable in $P$. 

We first show that $q_p[c_1, \dots,
c_n]$ behaves as $p$ with $v_i \mapsto c_i$. Because our construction uses
exceptions, we phrase the lemma formally using small-step operational
semantics. 

\begin{lemma} \label{lemma-3}
Let $Q, S, T = \mathsf{GadgetR2S}(P, L, v_i)$. For all states 
$\sigma_1, \sigma_2, \sigma_3$, all values $c_i$, 
all functions $p \in P$, and all programs $p'$ and $q_p'$, 
if $\sigma_1(v_i) = c_i$ then 
we have
$D_1 :: \langle p, \sigma_1 \rangle \rightarrow^*
        \langle p', \sigma_2 \rangle$
with $L$ not executed in $D_1$
iff  
$D_2 :: \langle q_p, \sigma_1 \rangle \rightarrow^*
        \langle q_p', \sigma_3 \rangle$ 
with \lt{raise REACHED} not executed in $D_2$
and $\sigma_2 \simeq_{v_i} \sigma_3$ where 
$q_p'$ is equal to $p'$ with each $v_i$ replaced
by $\boxed{c_i}$ and each $[L]$ replaced by
\lt{raise REACHED}. 
\end{lemma}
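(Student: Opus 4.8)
The plan is to mirror the proof of Lemma~\ref{lemma-execution}, but to work entirely in small-step semantics, since the $\mathsf{GadgetR2S}$ construction introduces non-local control flow via \lt{raise REACHED} and we must reason about intermediate configurations (in particular, whether the tracked program point has been visited). Because the construction transforms $p$ into $q_p$ by a purely syntactic substitution --- replacing each read of $v_i$ with a read of $\boxed{c_i}$ and each call to $p$ with a call to $p_Q$ (Step~2), and the label $[L]$ with \lt{raise REACHED} (Step~3) --- the reduction rules applicable to $\langle p, \sigma\rangle$ and $\langle q_p, \sigma\rangle$ are in one-to-one correspondence away from these three cases. I would therefore prove the statement by induction on the number of small steps, establishing the lockstep correspondence in both directions at once; the two directions are symmetric because the syntactic map defining $q_p$ is a bijection on program phrases that does not touch the reduction machinery itself.

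For the base case (zero steps) we have $p' = p$ and $\sigma_2 = \sigma_1$; then $q_p'$, obtained from $p'$ by the substitution, is exactly $q_p$, and taking $\sigma_3 = \sigma_1$ gives $\sigma_2 \simeq_{\{v_1,\dots,v_n\}} \sigma_3$ trivially. For the inductive step I would peel off the first step $\langle p,\sigma_1\rangle \rightarrow \langle p'',\sigma_1''\rangle$ and apply inversion on the inference rule used. The only non-routine case is a read of a variable $v_i$: in $P$ this evaluates to $\sigma_1(v_i) = c_i$, while in $Q$ the corresponding read of $\boxed{c_i}$ evaluates to the instantiated value $c_i$, so the two single steps produce equal results and preserve the state correspondence. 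The label case $[L]$ versus \lt{raise REACHED} is excluded by the hypotheses, and the fact that the corresponding program point is reached simultaneously in both derivations shows these two side conditions are equivalent. Every remaining case (assignments to variables other than the $v_i$, conditionals, loops, and function calls --- the last handled by appealing to the correspondence for the callee $p_Q$) applies identical rules in $P$ and $Q$, so the induction hypothesis closes it.

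The key invariant underpinning the variable-read case is that the $v_i$ are, by the setup of $\mathsf{GadgetR2S}$, global variables \emph{never directly assigned} in $P$; consequently $\sigma(v_i) = c_i$ is preserved by every step, so every read of $v_i$ returns exactly the constant $c_i$ to which $\boxed{c_i}$ is instantiated. The $\simeq_{\{v_1,\dots,v_n\}}$ relation in the conclusion absorbs the one structural discrepancy between the two worlds --- namely that Step~2 removes the declarations of the $v_i$ in $Q$, so these variables are simply absent from $\sigma_3$. I expect the main obstacle to be the careful bookkeeping of the ``not executed'' side conditions in the small-step setting: I must argue that neither derivation crosses the tracked program point strictly before the other, so that truncating both executions at the first occurrence of $[L]$ (respectively \lt{raise REACHED}) yields corresponding prefixes. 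Handling exceptions cleanly here is exactly why the lemma is phrased in small-step rather than large-step style, and getting this simultaneity right is the delicate point of the argument.
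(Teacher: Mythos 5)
Your proposal is correct and follows essentially the same route as the paper, which states only that the proof ``proceeds by induction on the structure of the operational semantics derivation $D_1$'' and mirrors the reasoning of Lemma~\ref{lemma-execution}, eliding the details. Your write-up supplies exactly those details --- the lockstep rule correspondence, the template-variable-read case, the use of the fact that the $v_i$ are never assigned, and the careful truncation at the first occurrence of $[L]$ versus \lt{raise REACHED} --- so no further comparison is needed.
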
 

\begin{proof}
The proof proceeds by induction on the structure of the operational semantics derivation $D_1$.  The proof follows the reasonings shown in Section~\ref{s2r} and is elided in the interest of brevity.
\end{proof}

Having established that the execution of $Q$'s functions mirrors 
the execution of $P$'s functions before
the location $L$ is reached (when $\boxed{c_i} = v_i = c_i$),
we now demonstrate that exception \lt{REACHED} is raised
in $Q$ iff location $L$ is reachable in $P$:

\begin{lemma} \label{lemma-4}
Let $Q, S, T = \mathsf{GadgetR2S}(P, L, v_i)$. For all values $c_i$,
$Q[c_1, \dots, c_n]() = 1$ 
iff location $L$ is reachable in $P$ starting from a state with
$v_i \mapsto c_i$. 
\end{lemma}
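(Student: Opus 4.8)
The plan is to establish Lemma~\ref{lemma-4} by connecting the single test case $Q() = 1$ to the reachability of $L$ in $P$, using Lemma~\ref{lemma-3} as the workhorse. First I would recall from Step~4 of the $\mathsf{GadgetR2S}$ construction that $\mathrm{main}_Q$ has a fixed shape: it invokes $\mathrm{main}_{P_Q}$ inside a \emph{try} block and returns $1$ exactly when the exception \lt{REACHED} is caught, and otherwise returns $0$. Therefore $Q[c_1,\dots,c_n]() = 1$ holds if and only if, during the execution of $\mathrm{main}_{P_Q}$ started in a state with $\boxed{c_i} \mapsto c_i$, the statement \lt{raise REACHED} is executed. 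This reduces the claim to showing that \lt{raise REACHED} fires in $\mathrm{main}_{P_Q}$ iff location $L$ is reachable in $\mathrm{main}_P$ from a state with $v_i \mapsto c_i$.

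Next I would argue both directions using the contrapositive form of Lemma~\ref{lemma-3}, since that lemma is phrased in terms of executions in which the distinguished location/exception has \emph{not yet} been triggered. For the forward direction, suppose $L$ is reachable in $P$: then there is a finite small-step derivation $\langle \mathrm{main}_P, \sigma_1 \rangle \rightarrow^* \langle p', \sigma_2 \rangle$ that first executes $[L]$. Splitting this derivation at the step just before $L$ is executed, Lemma~\ref{lemma-3} gives a matching execution in $Q$ reaching the corresponding configuration with \lt{raise REACHED} not yet executed; one more step then executes \lt{raise REACHED} (since $q_p'$ replaces $[L]$ by \lt{raise REACHED} at precisely the corresponding program point). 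By the semantics of exceptions, control transfers to the catch clause in $\mathrm{main}_Q$ and $Q() = 1$. The reverse direction is symmetric: if \lt{raise REACHED} fires in $Q$, I split its execution just before that step, apply Lemma~\ref{lemma-3} in the other direction to obtain a matching $P$-execution reaching the point corresponding to $[L]$, and conclude that $L$ is reachable. The state-equivalence clause $\sigma_2 \simeq_{v_i} \sigma_3$ guarantees that the guards controlling whether $[L]$ (resp.\ \lt{raise REACHED}) is reached evaluate identically, since those guards do not depend on the $v_i$ except through the held-constant values $c_i$.

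The main obstacle I anticipate is handling the exceptional (non-local) control flow cleanly at the boundary step. Lemma~\ref{lemma-3} explicitly conditions on \lt{raise REACHED} \emph{not} having been executed, so the lemma alone does not describe what happens once the exception is raised; I must layer the exception-propagation semantics on top of it to argue that raising \lt{REACHED} anywhere inside the $\mathrm{try} \{ \mathrm{main}_{P_Q}(); \}$ block causes $\mathrm{main}_Q$ to return $1$ without executing the trailing \lt{return 0}. This requires care that \lt{REACHED} is a \emph{unique} exception (Step~3) so it cannot be intercepted by any intervening handler and is caught precisely by the single \emph{try-catch} in $\mathrm{main}_Q$. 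I would also need to confirm the case where $L$ is never reached: by Lemma~\ref{lemma-3} the $Q$-execution then mirrors the (possibly nonterminating) $P$-execution without ever raising \lt{REACHED}, so either $\mathrm{main}_{P_Q}$ returns normally and $\mathrm{main}_Q$ falls through to \lt{return 0}, or both diverge, and in neither situation does $Q()$ evaluate to $1$ — matching the unreachability of $L$.
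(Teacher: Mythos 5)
Your proposal is correct and follows essentially the same route as the paper's proof: both reduce $Q()=1$ to ``\lt{raise REACHED} fires'' via the \emph{try-catch} semantics of $\mathrm{main}_Q$, then invoke Lemma~\ref{lemma-3} on the prefix of the execution up to (but not including) the step that executes $[L]$ (resp.\ \lt{raise REACHED}) to match the two derivations in both directions. Your treatment is somewhat more explicit than the paper's --- notably in handling the divergence/normal-return case when $L$ is unreachable and in flagging the uniqueness of the \lt{REACHED} exception --- but the underlying argument is the same.
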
 

\begin{proof}
The proof follows from the instantiation of Lemma~\ref{lemma-3}. 
By construction and the operational semantics rules for \emph{try-catch},
$q_{main}$ returns $1$ iff $q_p$ did raise \lt{REACHED}. 
By construction, there is exactly one occurrence of 
\lt{raise REACHED} in $Q$, and that statement occurs at a point
corresponding to the singular location $L$ in $P$. Note that $q_p$ is derived
from $p$ and shares the same structure except that reads from variables
$v_i$ in $P$ are replaced with reads from template variables $\boxed{c_i}$
in $Q$ and the location $L$ is replaced by \lt{raise REACHED}. 

Thus we apply Lemma~\ref{lemma-3} with $p = p_{main}$, $q_p = q_{main}$
and $\sigma_1$ such that $\sigma_1(v_i) = c_i$ (from the
statement of Lemma~\ref{lemma-4}). Since Lemma~\ref{lemma-3} applies to
all program points $p'$ and $q_p'$ up to the first execution of $L$
(resp. \lt{raise REACHED}), we have a derivation $D_1$ starting in
$\sigma_1$ and ending in $p' = [L]; p''$ iff we have a matching derivation
$D_2$ starting in $\sigma_1$ and ending in $q_p' = $ \lt{raise REACHED}$;
q_p''$. 

Thus, $L$ is reachable in $P$ starting from a state $\sigma_1(v_i) = c_i$ 
iff $Q[c_1, \dots, c_n]()$ returns $1$. 
\end{proof} 

\noindent This Lemma leads directly to the Theorem for this direction of the
reduction.

\begin{theorem}\label{thrs}
Let $P$ be a program with a location $L$ and global variables $v_1, \dots, v_n$.
Let $Q, S, T = \mathsf{GadgetR2S}(P,L,v_i)$. 
Then there exist input values $t_i$ such that 
the execution of $P$ with $v_i \mapsto t_i$ reaches $L$ 
iff there exist parameter values $c_i$ such that
$\forall (i,o) \in T. (Q[c_1, \dots, c_n])(i) = o$. 
That is, 
the reachability problem in Definition~\ref{defpr} is
reducible to the template-based synthesis problem in Definition~\ref{defps}.
\end{theorem}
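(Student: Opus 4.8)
The plan is to derive Theorem~\ref{thrs} as an immediate corollary of Lemma~\ref{lemma-4}, exactly mirroring how Theorem~\ref{thsr} follows from Lemma~\ref{lemma-wp} via the substitution $t_i = c_i$. The first step is to unfold the synthesis-side condition. By Step 6 of the $\mathsf{GadgetR2S}$ construction, the test suite $T$ for $Q$ is the singleton $\{Q() = 1\}$: one test with empty input and expected output $1$. Hence the quantified synthesis predicate $\forall (i,o) \in T ~.~ (Q[c_1,\dots,c_n])(i) = o$ collapses to the single equation $Q[c_1,\dots,c_n]() = 1$. This is the only definitional bookkeeping the theorem requires, and it reduces the target biconditional to the claim: there exist $t_i$ with $P$ (initialized $v_i \mapsto t_i$) reaching $L$ iff there exist $c_i$ with $Q[c_1,\dots,c_n]() = 1$.

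The second step is to invoke Lemma~\ref{lemma-4} under the identification $t_i = c_i$. That lemma states precisely that, for \emph{every} tuple of values $c_i$, the equation $Q[c_1,\dots,c_n]() = 1$ holds iff $L$ is reachable in $P$ from the initial state mapping $v_i \mapsto c_i$. Since this is a biconditional holding pointwise for each value tuple, the existential quantifier over $c_i$ on the synthesis side corresponds exactly to the existential quantifier over $t_i$ on the reachability side, with the witness carried across unchanged. Both directions therefore follow at once: given a reachability witness $t_i$, set $c_i = t_i$ and apply Lemma~\ref{lemma-4} to obtain a synthesis solution; conversely, given a synthesis solution $c_i$, the same lemma yields the reachability witness $t_i = c_i$.

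I expect essentially no obstacle at the level of this theorem itself — the substantive content lives in Lemma~\ref{lemma-3} (the small-step simulation establishing that $q_p$ mirrors $p$ until $L$ / \lt{raise REACHED}, including the correct handling of the exception jump into main$_Q$'s \emph{try-catch}) and in Lemma~\ref{lemma-4} (lifting that simulation to the statement that \lt{REACHED} is caught iff $L$ is reachable). The one point deserving explicit care is the correspondence between the two notions of witness: in Definition~\ref{defpr} the $t_i$ initialize global program variables, whereas in Definition~\ref{defps} the $c_i$ instantiate template parameters $\boxed{c_i}$. Because the gadget was built so that each $v_i$ and its $\boxed{c_i}$ occupy corresponding syntactic positions, this correspondence is definitional rather than a proof burden; I would state it explicitly to justify passing the witness across the biconditional without modification, and then conclude.
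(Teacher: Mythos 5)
Your proposal matches the paper's own proof: both derive Theorem~\ref{thrs} directly from Lemma~\ref{lemma-4} with the identification $t_i = c_i$, after observing that the test suite $T$ is the singleton $\{(\emptyset,1)\}$ so the universal quantifier collapses to the single assertion $Q[c_1,\dots,c_n]() = 1$. The argument is correct and essentially identical to the paper's.
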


\begin{proof}
This follows from Lemma~\ref{lemma-4} with $t_i$ equal to $c_i$. Note 
that $T$ is the singleton set containing $(\emptyset, 1)$ by construction,
so the universal qualifier reduces to the single assertion
$Q[c_1, \dots, c_n]() = 1$. 
\end{proof} 
\fi

\end{document}